\documentclass[a4paper]{article}
\usepackage[textheight=234mm,textwidth=151mm]{geometry}
\usepackage{amsthm,amssymb,graphicx}

\newtheorem{pros}{Proposition}

\newtheorem{defn}{Definition}

\title{\bf\sffamily A Mathematically Sensible Explanation of the Concept of Statistical Population}
\author{Yiping Cheng\\ Beijing Jiaotong
University, China\\\texttt{ypcheng@bjtu.edu.cn}}
\date{}

\begin{document}
\vskip -16pt \maketitle

\begin{abstract}In statistics education, the concept of population is widely felt hard to
grasp, as a result of vague explanations in textbooks. Some
textbook authors therefore chose not to mention it. This paper
offers a new explanation by proposing a new theoretical framework
of population and sampling, which aims to achieve high
mathematical sensibleness. In the explanation, the term population
is given clear definition, and the relationship between simple
random sampling and iid random variables are examined
mathematically.
\end{abstract}

\hskip 10pt {\small {\bf Keywords:\/} statistical education;
statistical population; random sampling; random samples}

\section{Introduction}

The theory of statistics is based on the theory of probability,
and the first mathematical concept of statistics beyond pure
probability is that of {\em random samples\/}. A closely related
semi-mathematical concept, namely that of {\em statistical
population\/}, is also popular among statistical practitioners.
The term {\em population\/} is usually used instead of {\em
statistical population\/} where no confusion would arise.

However, despite its popularity in the statistical community, in
textbooks the treatment of the concept of population is quite
varied, which reflects varied views of different authors on that
concept. As a result of our study of 9 undergraduate-level
textbooks in statistics:
\cite{groot89,nitis00,DKLM05,devore12,hogg13,montgomery03,soong04,wackerly08,walpole12},
we found that they can be classified into two groups according to
the way they introduce the concept of random samples. Let us give
the detail below.

\subsection{Group A: Textbooks That Avoid Mentioning Population}

This group consists of, in chronological order,
\cite{groot89,nitis00,DKLM05,devore12,hogg13}. One can check
\cite[page 145]{groot89}, \cite[page 246]{DKLM05}, and \cite[page
204]{hogg13} for evidences of our classification. Book
\cite{nitis00} does not explicitly give a formal definition of
random samples although it does use the term, and book
\cite{devore12} uses the term of random samples very early, with
its definition deferred until page 214.

In \cite[page 204]{hogg13}, a definition of random samples is
given as follows: \vskip 6pt

\fbox{\begin{minipage}{5.4in}\small \hskip 1em {\bf Definition:\/}
If the random variables $X_1,X_2, \cdots,X_n$ are independent and
identically distributed (iid), then these random variables
constitute a random sample of size $n$ from the common
distribution.\end{minipage}}

\vskip 6pt

Definitions of this concept in other textbooks in this group,
though with different wordings, are mathematically equivalent, so
they are omitted here. By defining random samples directly from
distribution without mentioning population, the authors obviated
the difficult task of explaining the population concept.

\subsection{Group B: Textbooks That Use Population to Introduce Random Samples}

This group consists of, in chronological order,
\cite{montgomery03,soong04,wackerly08,walpole12}. Of them, we
found in \cite{montgomery03,walpole12} detailed explanations of
the concepts of population and random samples before giving formal
definitions. We now quote the following excerpt of the explanation
in \cite[pages 195--197]{montgomery03}, which is similar to the
explanation in \cite[page 225--227]{walpole12}.

 \vskip 6pt\fbox{\begin{minipage}{5.4in}\small \hskip 1em A population
consists of the totality of the observations with which we are
concerned.

\hskip 1em In any particular problem, the population may be small,
large but finite, or infinite. The number of observations in the
population is called the size of the population. For example, the
number of underfilled bottles produced on one day by a soft-drink
company is a population of finite size. The observations obtained
by measuring the carbon monoxide level every day is a population
of infinite size. We often use a probability distribution as a
model for a population. For example, a structural engineer might
consider the population of tensile strengths of a chassis
structural element to be normally distributed with mean $\mu$ and
variance $\sigma^2$. We could refer to this as a normal population
or a normally distributed population.

\hskip 1em A sample is a subset of observations selected from a
population.

\hskip 1em To define a random sample, let $X$ be a random variable
that represents the result of one selection of an observation from
the population. Let $f(x)$ denote the probability density function
of $X$. Suppose that each observation in the sample is obtained
independently, under unchanging conditions. That is, the
observations for the sample are obtained by observing $X$
independently under unchanging conditions, say, $n$ times. Let
$X_i$ denote the random variable that represents the $i$-th
replicate. Then, $X_1,X_2, \cdots,X_n$ is a random sample and the
numerical values obtained are denoted as $x_1,x_2, \cdots,x_n$.
The random variables in a random sample are independent with the
same probability distribution $f(x)$ because of the identical
conditions under which each observation is obtained.
\end{minipage}}  \vskip 6pt

In \cite[page 259]{soong04} there is a short reference to the
concept of population.

\vskip 6pt \fbox{\begin{minipage}{5.4in}\small \hskip 1em The
appropriate representation of $\hat{\theta}$ is
$\hat{\Theta}=h(X_1,X_2, \cdots,X_n)$ where $X_1,X_2, \cdots,X_n$
are random variables representing a sample from random variable
$X$, which is referred to in this context as the {\em
population\/}.
\end{minipage}}  \vskip 6pt

And in \cite[page 78]{wackerly08} there is concise
characterization of random sampling.

\vskip 6pt\fbox{\begin{minipage}{5.4in}\small \hskip 1em Let $N$
and $n$ represent the numbers of elements in the population and
sample, respectively. If the sampling is conducted in such a way
that each of the $C_N^n$ samples has an equal probability of being
selected, the sampling is said to be random, and the result is
said to be a random sample.
\end{minipage}}  \vskip 6pt

\subsection{Motivation for a Mathematically Sensible Explanation}

The concept of population is widely felt hard to grasp by
students, which is perhaps the primary reason for it not being
included in many textbooks. We are not attempting to argue whether
we should or should not include it in textbooks. Instead, we try
to propose an explanation of the population concept that makes
perfect sense in a mathematical manner (which we call {\em
mathematically sensible\/} in the sequel). Our attempt was
motivated by our perception that the existing explanations (the
foregoing explanation of \cite{montgomery03} being a good
representative) of the concept are not adequately mathematically
sensible. Our reasons are as follows:
\begin{enumerate}
\item One would expect in a mathematically sensible explanation,
the population fits into the probability theory framework in that
it corresponds to a clear-cut mathematical concept. The existing
explanations are chaotic in this respect. Let us examine each of
the three possibilities:
\begin{enumerate} \item {\bf population=sample space\/}. This
possibility accords with the first half of the second paragraph of
the foregoing quote of \cite[pages 195--197]{montgomery03}.
However, how can a sample space have a distribution? Only random
variables can. \item {\bf population=range space of the random
variable\/}. But this possibility directly contradicts the second
example given in the first half of the second paragraph of the
above-mentioned quote. If the carbon monoxide level can have only
100 values due to quantization, then should we say the size of
population is now 100? \item {\bf population=the random
variable\/}. This possibility is the most-often seen. But then
what is the corresponding mathematical concept for observation?
How can we view a population as a set of observations, as now the
population is a random variable? The first sentence of the last
paragraph of the above-mentioned quote suggests one random
variable corresponds to one observation, which is in contradiction
to the first sentence of the quote under this possibility.
\end{enumerate}
\item The existing explanations do not mathematically show why
random sampling in its primitive sense (characterized by the
foregoing quote of \cite[page 78]{wackerly08}) leads to
independent and identically distributed random variables.
\end{enumerate}

The question now becomes: Does a mathematically sensible
explanation exist? Our answer is yes and we will provide such an
explanation in this paper.

\section{A New Probability Framework for Population and Sampling}

Actually, our new explanation of the population concept is a part
of a new probability framework for population and sampling. Before
giving the detail, let us look at its salient features:
\begin{enumerate}
\item It involves two probability spaces: a {\em population
probability space\/} and an {\em experiment probability space\/}.
However they are closely related to each other. \item The term
population is no longer mathematically overloaded as in the
existing explanations, it now corresponds to the clear-cut
mathematical concept of the sample space of the population
probability space. \item The random variable $X$, which was
regarded as the population random variable in the existing
explanations, lives in the population probability space, whereas
the sample $X_1,X_2, \cdots,X_n$ lives in the experiment
probability space. So mathematically speaking, they are NOT
defined over the same probability space.\item The physical
operation of sampling is described by the mathematical operation
of mapping from the experiment space (i.e. sample space of the
experiment probability space) to the population, and the {\em
simple randomness\/} of the the sampling is mathematically
described by the to-be-defined {\em simpleness\/} of the mapping.
\end{enumerate}

\subsection{The Population Probability Space}

Recall that a probability space is a triple
$(\Omega,\mathcal{F},P)$,  where $\Omega$ is a set called the
sample space, $\mathcal{F}$ is a $\sigma$-field of subsets of
$\Omega$ (see \cite[page 10]{hogg13} for a description), and
$P:\mathcal{F}\mapsto [0,1]$ is a probability function as defined
in \cite[page 11]{hogg13}.

Mathematically speaking, any probability space can be a population
probability space. It is only that the sample space of a
population probability space is called population, which,
understood physically, consists of individuals with various
observable quantities. For example, if we are carrying out a
health survey in a state with one million people, then we can
define a population probability space as follows.
\begin{eqnarray*}
\Pi &=& \{0,1,\cdots,999999\}, \\
\mathcal{F}_\Pi &=& \{\mbox{All subsets of }\Pi\},\\
P_\Pi(A)&=&{\mbox{Number of elements in }A\over 1000000},\mbox{
for any}A\subseteq \Pi.
\end{eqnarray*}

Obviously, $(\Pi,\mathcal{F}_\Pi,P_\Pi)$ constitutes a probability
space. The elements in $\Pi$ are ordinal ids for the people in the
state. We can then define random variables $X$, $Y$, $Z$, ... to
represent the length, weight, blood type, etc. of the people. For
example, it can be so defined that $X(785)$ is the length in
meters of the individual whose ordinal id is 785 in this state.
Note also that if the sample space is finite and all its elements
are assigned equal probability, then that probability space is
called ``classical" in the literature. So this example is
classical. However, our theory applies also to non-classical and
infinite population cases.

The population probability space can be understood either
stochastically or non-stochastically. It can be understood
non-stochastically, since in the above description, no physical
randomness is involved. It can also be understood stochastically
if we view $P_\Pi(A)$ as the the ``(physical) probability of any
member of $A$ being selected for survey".

\subsection{The Experiment Probability Space and Simple Sampler Mapping}

An experiment probability space is one where the sample space is
the set of experiment outcomes. We denote it by the triple
$(E,\mathcal{F}_E,P_E)$.

In our framework, size-$n$ sampling is selection of an $n$-tuple
from the population. What $n$-tuple is selected is determined by
the experiment outcome. Therefore, sampling can be mathematically
described by a {\em sampler mapping\/} $S:E\mapsto \Pi^n$.
Practically, it is the usual case that the size of population is
massively larger than $n$, and thus one element of the tuple can
rarely equal another element. In addition, it is also practically
usual that the order of the elements does not matter in later
statistical processing. Therefore practically we often say
sampling is selection of a subset of size $n$ from the population.
The $n$-tuple term is adopted here, because it is considered more
convenient.

Now we give the mathematical property to capture the simple
randomness as described in the quote of \cite[page 78]{wackerly08}
in Section 1.

\begin{defn}
Let $(\Pi,\mathcal{F}_\Pi,P_\Pi)$ be a population probability
space, and $(E,\mathcal{F}_E,P_E)$ be an experiment probability
space. A sampler mapping $S:E\mapsto \Pi^n$ is said to be simple,
if for all
$B_1\in\mathcal{F}_\Pi,B_2\in\mathcal{F}_\Pi,\cdots,B_n\in\mathcal{F}_\Pi$,
\begin{equation} P_E(\{e|S_1(e)\in B_1,\  S_2(e)\in B_2,\ \cdots,\
S_n(e)\in B_n\})=P_\Pi(B_1)P_\Pi(B_2)\cdots P_\Pi(B_n).
\end{equation}
\end{defn}

For a fixed experiment probability space, the existence of a
simple sampler mapping is not guaranteed. In fact, it is more a
requirement on the experiment probability space than one on the
sampler mapping.

The quote of \cite[page 78]{wackerly08} is with an implicit
assumption which, put in our terminology, is the population
probability space being classical. Then, a subset of $\Pi$ of size
$n$ corresponds to $n!$ tuples with different orders, and if the
mapping is simple, each order has $1\over N^n$ probability of
being selected, thus the probability of the subset being selected
is ${n!\over N^n}$, irrespective of what elements constitute the
subset. This is just the simple-randomness of sampling.

\subsection{Mathematical Construction of Experiment Probability Space and
Simple Sampler Mapping}

Theoretically, there always exists a trivial construction of
experiment probability space and simple sampler mapping, which is
the product measure space $(\Pi,\mathcal{F}_\Pi,P_\Pi)^n$ and the
identity mapping. But this construction is only of theoretic
value, not expected to yield real benefits.

If we want to construct an experiment probability space and a
simple sampler mapping that faithfully describe a practically used
sampling procedure, then this task is formidable, because in
practice, ad-hoc methods are usually used, which are only
approximately simple random. In such cases we can just assume the
sampling to be simple random and assume the existence of an
experiment probability space and simple sampler mapping, without
giving the particular construction.

Perhaps the only scenario that we want to do the explicit
mathematical construction is when we need the construction to
design a simple random sampling algorithm. In that scenario, the
experiment probability space, viewed as a model for random
generator, should already have an implementation available. The
following random number generators are considered to have reliable
(albeit just close approximate) algorithm implementations:
\begin{description} \item[Discrete:] Given a length $L$,
random-generate $\xi$ from $\{0,1,\cdots,L-1\}$ with each of
$0,1,\cdots,L-1$ having equal probability. \item[Continuous:]
Uniformly random-generate $\xi$ from real interval $[0,1]$.
\end{description}

\subsubsection{Construction Based on the Discrete Generator}

This method is applicable only when the population probability
space is classical. So assumed, let $N$ be the size of population,
and let
\begin{eqnarray}
E &=& \{0,1,\cdots,N^n-1\}, \\
\mathcal{F}_E &=& \{\mbox{All subsets of }E\},\\
P_E(A)&=&{\mbox{Number of elements in }A\over N^n},\mbox{ for
any}A\subseteq E.
\end{eqnarray}

Eqs. (2--4) constitute a model for the length-$N^n$ discrete
generator. Now we define the sampler mapping: \begin{equation}
S(e)=(a_1,a_2,\cdots,a_n)
\end{equation} where \begin{equation} e=a_1
N^{n-1}+a_2N^{n-2}+\cdots+a_n\end{equation}with
\begin{equation}0\leq a_1\leq N-1,\quad 0\leq a_2\leq N-1,\quad\cdots,\quad 0\leq a_n\leq
N-1.\end{equation}

It is easy to see that $S$ is a simple sampler mapping.

\subsubsection{Construction Based on the Continuous Generator}

This method is applicable when the population probability space is
finite, both classical and nonclassical. However, we only describe
the method for the classical case, as the nonclassical case is
much more complicated to deal with. Let $N$ be the size of
population, and let
\begin{eqnarray}
E &=& [0,1], \\
\mathcal{F}_E &=& \{\mbox{All Borel subsets of }E\},\\
P_E(A)&=&\mbox{Measure of }A,\mbox{ for any }A\in \mathcal{F}_E.
\end{eqnarray}

Eqs. (8--10) constitute a model for the continuous generator. Now
we define the sampler mapping: \begin{equation}
S(e)=(a_1,a_2,\cdots,a_n)
\end{equation} where \begin{equation} e=a_1
N^{-1}+a_2N^{-2}+\cdots+a_nN^{-n}+\cdots\end{equation}with
\begin{equation}0\leq a_1\leq N-1,\quad 0\leq a_2\leq N-1,\quad\cdots,\quad 0\leq a_n\leq
N-1.\end{equation}

We omit the proof to show that $S$ is a simple sampler mapping
with the classical assumption.

\subsection{Why Simple Random Sampling Leads to IID Random Variables}

In Section 2.2 we showed that in the classical case, simpleness of
the sampler mapping over the experiment probability space implies
simple randomness of sampling. However, the reverse implication
may not hold, and we may need a stronger version. Thus we try to
revise the quote of \cite[page 78]{wackerly08} a bit by replacing
``subsets" with ``tuples":

\vskip 6pt\fbox{\begin{minipage}{5.4in}\small \hskip 1em Let $N$
and $n$ represent the numbers of elements in the population and
sample, respectively. If the sampling is conducted in such a way
that each of the $N^n$ ordered samples has an equal probability of
being selected, the sampling is said to be simple random, and the
result is said to be a random sample.
\end{minipage}}  \vskip 6pt

This tuple-version is still completely intuitively reasonable and
is readily seen to be equivalent in the classical case to the
simpleness of the sampler mapping over the experiment probability
space. So now this version of simple randomness will be adopted.
Furthermore, because of their equivalence in the classical case,
and the simpleness concept is more general, we decide to use the
simpleness as defined in Definition 1 to carry out our derivation
of the following proposition about random variables.
\begin{pros}
Let $X$ be a random variable over population probability space
$(\Pi,\mathcal{F}_\Pi,P_\Pi)$. Let $(E,\mathcal{F}_E,P_E)$ be an
experiment probability space and $S:E\mapsto \Pi^n$ be a simple
sampler mapping. Then \begin{enumerate} \item For $i=1,\cdots,n$,
$X_i$ as defined by $X_i(e):=X(S_i(e))$ is a random variable over
$(E,\mathcal{F}_E,P_E)$ and has the same distribution function as
$X$. \item $X_1,X_2,\cdots,X_n$ are independent.
\end{enumerate}
\end{pros}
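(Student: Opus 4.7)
The plan is to reduce both parts to direct applications of the simpleness identity (Equation (1)), using the observation that $X_i^{-1}(C) = S_i^{-1}(X^{-1}(C))$ for every Borel set $C \subseteq \mathbb{R}$. Because $X$ is a random variable over the population space, $X^{-1}(C) \in \mathcal{F}_\Pi$, so every event concerning the $X_i$'s can be rewritten as an event of the form appearing in Equation (1), after which the definition of simple sampler mapping does all the work.

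For part 1, I would fix $i$ and a Borel set $C$, let $B := X^{-1}(C)$, and apply the simpleness identity with $B_i = B$ and $B_j = \Pi$ for $j \ne i$. The left-hand side then equals $P_E(\{e : S_i(e) \in B\}) = P_E(X_i^{-1}(C))$, while the right-hand side collapses to $P_\Pi(B) \cdot 1 \cdots 1 = P_\Pi(X \in C)$. That the left-hand side is a well-defined probability automatically yields $X_i^{-1}(C) \in \mathcal{F}_E$, so $X_i$ is indeed a random variable; the equality itself gives $P_E(X_i \in C) = P_\Pi(X \in C)$ for every Borel $C$, which in particular matches the two cumulative distribution functions and so establishes identical distribution. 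For part 2, I would take arbitrary Borel sets $C_1, \ldots, C_n$, set $B_i := X^{-1}(C_i) \in \mathcal{F}_\Pi$, and apply simpleness once more: the joint event $\{X_1 \in C_1, \ldots, X_n \in C_n\}$ is literally $\{e : S_1(e) \in B_1, \ldots, S_n(e) \in B_n\}$, whose $P_E$-probability factors as $\prod_i P_\Pi(B_i)$ by Equation (1); invoking part 1 rewrites each factor as $P_E(X_i \in C_i)$, giving the product-of-marginals criterion for mutual independence.

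I do not anticipate a substantive obstacle, since the proposition is essentially an unpacking of Definition 1; the only subtle point is a mild chicken-and-egg issue with measurability, namely that one needs $X_i^{-1}(C) \in \mathcal{F}_E$ before writing $P_E$ of it. This is resolved by reading Equation (1) as implicitly asserting that each event $\{e : S_1(e) \in B_1, \ldots, S_n(e) \in B_n\}$ lies in $\mathcal{F}_E$, so the specialization with all but one $B_j$ equal to $\Pi$ delivers the required measurability for free. If cleaner exposition is desired, I would isolate as a preliminary remark the fact that $S_i^{-1}(B) \in \mathcal{F}_E$ for every $B \in \mathcal{F}_\Pi$, and then the rest of the argument flows with no further subtlety.
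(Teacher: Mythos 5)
Your proposal is correct and follows essentially the same route as the paper's own proof: pull the event back through $X$ to obtain a set in $\mathcal{F}_\Pi$, apply the simpleness identity with all other $B_j = \Pi$ to get measurability and the marginal distribution, and then apply it once more with all $n$ sets to get the factorization for independence. The only cosmetic difference is that you work with general Borel sets $C$ while the paper restricts to half-lines $\{X < u\}$ (i.e., argues directly at the level of distribution functions), and your handling of the measurability point matches the paper's reading of Definition 1.
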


\begin{proof}
1. Fix an arbitrary $u\in\mathbb{R}$. Let $B=\{\pi|X(\pi)<u\}$,
then since $X$ is a random variable over
$(\Pi,\mathcal{F}_\Pi,P_\Pi)$, we have $B\in\mathcal{F}_\Pi$. Then
\begin{equation}\{e|X_i(e)<u\}=\{e|X(S_i(e))<u\}=\{e|S_i(e)\in
B, S_j(e)\in \Pi\mbox{ for }j\neq i\}. \end{equation} The right
side of (14) is in $\mathcal{F}_E$, because it has its probability
defined by Definition 1. Since $u$ is arbitrary, we have $X_i$ is
a random variable over $(E,\mathcal{F}_E,P_E)$. Furthermore, also
by Definition 1,
\[P_E(\{e|X_i(e)<u\})=P_E(\{e|S_i(e)\in
B, S_j(e)\in \Pi\mbox{ for }j\neq
i\})=P_\Pi(B)=P_\Pi(\{\pi|X(\pi)<u\}).
\] Since $u$ is arbitrary, this means $X_i$ has the same
distribution function has $X$.

2. Fix an arbitrary set of real values $u_1,u_2,\cdots,u_n$. Then
\[P_E(\{e|X_1(e)<u_1,X_2(e)<u_2,\cdots,X_n(e)<u_n\})\]
\[=P_E(\{e|X(S_1(e))<u_1,X(S_2(e))<u_2,\cdots,X(S_n(e))<u_n\})\]
\[=P_E(\{e|S_1(e)\in\{\pi|X(\pi)<u_1\},S_2(e)\in\{\pi|X(\pi)<u_2\},\cdots,S_n(e)\in\{\pi|X(\pi)<u_n\}\})\]
\[=P_\Pi(\{\pi|X(\pi)<u_1\})P_\Pi(\{\pi|X(\pi)<u_2\})\cdots P_\Pi(\{\pi|X(\pi)<u_n\})\]
\[=P_E(\{e|X_1(e)<u_1\})P_E(\{e|X_2(e)<u_2\})\cdots
P_E(\{e|X_n(e)<u_n\}).\] This means $X_1,X_2,\cdots,X_n$ are
independent.
\end{proof}

\section{Concluding Remarks}

Section 2 presents a fully developed theoretical framework for
population and sampling. A large portion of the previous
explanations in verbal language has now been replaced by results
in mathematical language. This makes our explanation much less
vague than previous ones. However, some of the results may require
a bit too much of mathematical maturity of the reader, and
therefore in courses or textbooks, it may not be the most suitable
to present the framework in its full version. However, we believe
the following points should be stressed in educational occasions:
\begin{enumerate}
\item The term ``population" as a noun should refer to the sample
space, not the random variable as is the case in many textbooks.

\item The term ``population" can be used as an attributive in
``population random variable", ``population distribution", and
``population density", which refer to a random variable in the
population probability space, its distribution, and its density,
respectively.

\item The population random variable $X$, and the sample random
variables $X_1,X_2,\cdots,X_n$ do not live in the same probability
space. Failure to notice this is the cause for many difficulties
in the existing explanations.

\item The term ``sample" may suggest students to believe the
random sample $X_1,X_2,\cdots,X_n$ contains less information than
the population random variable $X$. Actually, mathematically
speaking, $X_1,X_2,\cdots,X_n$, and even each of its members,
$X_i$, contain no less information than $X$. It is only that in
practice, only one experiment is done, and we only hold $n$ real
values $x_1,x_2,\cdots,x_n$ as observed values of the sample.
Usually $n<<N$.
\end{enumerate}

Finally, it is hoped that this new framework and explanation, with
its unique feature of mathematical sensibleness, will be helpful
to a large number of statistics students.

\end{document}